\DeclareMathOperator{\Tr}{Tr}
\newtheorem{theorem}{Theorem}
\newtheorem{proposition}{Proposition}
\begin{document}

\title{Catalytic Transformations of Pure Entangled States}
\author{Tulja Varun Kondra}
\email{t.kondra@cent.uw.edu.pl}

\author{Chandan Datta}

\author{Alexander Streltsov}

\affiliation{Centre for Quantum Optical Technologies, Centre of New Technologies,
University of Warsaw, Banacha 2c, 02-097 Warsaw, Poland}

\begin{abstract}
    Quantum entanglement of pure states is usually quantified via the entanglement entropy, the von Neumann entropy of the reduced state. Entanglement entropy is closely related to entanglement distillation, a process for converting quantum states into singlets, which can then be used for various quantum technological tasks. The relation between entanglement entropy and entanglement distillation has been known only for the asymptotic setting, and the meaning of entanglement entropy in the single-copy regime has so far remained open. Here we close this gap by considering entanglement catalysis. We prove that entanglement entropy completely characterizes state transformations in the presence of entangled catalysts. Our results imply that entanglement entropy quantifies the amount of entanglement available in a bipartite pure state to be used for quantum information processing, giving asymptotic results an operational meaning also in the single-copy setup.
\end{abstract}

\maketitle

Originated in chemistry, catalysis allows one to increase the rate of a chemical reaction. This is achieved by using a catalyst, a substance which is not consumed in the process, and can thus be used repeatedly without additional costs. Similarly, a quantum catalyst is a quantum system which is not changed by the process under consideration, giving access to transformations which are not achievable without it. In the early days of quantum information science, the investigation of catalysis has focused on transformations of entangled states via local operations and classical communication (LOCC)~\cite{NielsenPhysRevLett.83.436,JonathanPhysRevLett.83.3566,EisertPhysRevLett.85.437}.

One of the first examples~\cite{JonathanPhysRevLett.83.3566} demonstrating the power of catalysis in quantum theory involves pure entangled states shared by two parties, Alice and Bob. Two states, denoted by $\ket{\psi}^{AB}$ and $\ket{\phi}^{AB}$, are chosen such that no LOCC procedure can convert $\ket{\psi}^{AB}$ into $\ket{\phi}^{AB}$. Such states can be found using conditions for LOCC transformations presented in~\cite{NielsenPhysRevLett.83.436}. Even if a direct conversion from $\ket{\psi}^{AB}$ to $\ket{\phi}^{AB}$ is not possible, in some cases conversion can still be achieved by using a catalyst. This is an additional quantum system in an entangled state $\ket{\mu}^{A'B'}$, enabling the transformation $\ket{\psi}^{AB} \otimes \ket{\mu}^{A'B'} \rightarrow \ket{\phi}^{AB} \otimes \ket{\mu}^{A'B'}$. Since the state of the catalyst remains unchanged in the process, it can be reused for another transformation in the future. A complete characterization of pure quantum states which can be transformed into each other via LOCC with a catalyst has so far remained open. Partial results addressing this question have been presented over the last decades~\cite{VidalPhysRevA.62.012304,DuanPhysRevA.71.042319,Turgut_2007,Klimesh0709.3680,Aubrun2008,SandersPhysRevA.79.054302,GraboweckyPhysRevA.99.052348}.

Catalysis also enhances transformations in quantum thermodynamics, lifting the well known second law in the classical domain to many second laws in the quantum regime~\cite{Brandao3275}. Moreover, in quantum thermodynamics any quantum state can be used as a universal catalyst~\cite{Lipka-Bartosik2006.16290}. Catalytic properties of quantum coherence~\cite{BuPhysRevA.93.042326,StreltsovRevModPhys.89.041003}, purity~\cite{GOUR20151}, and theories having certain symmetries~\cite{Marvian_2013} have also been considered. There has also been significant interest in correlated catalysts~\cite{AbergPhysRevLett.113.150402,Vaccaro_2018,LostaglioPhysRevLett.123.020403,Wilminge19060241}. Allowing a catalyst to build up correlations with the system has shown to enhance the transformation power of the corresponding procedure~\cite{BoesPhysRevLett.122.210402,wilming2020entropy,shiraishi2020quantum}.

\begin{figure}
\includegraphics[width=0.8\columnwidth]{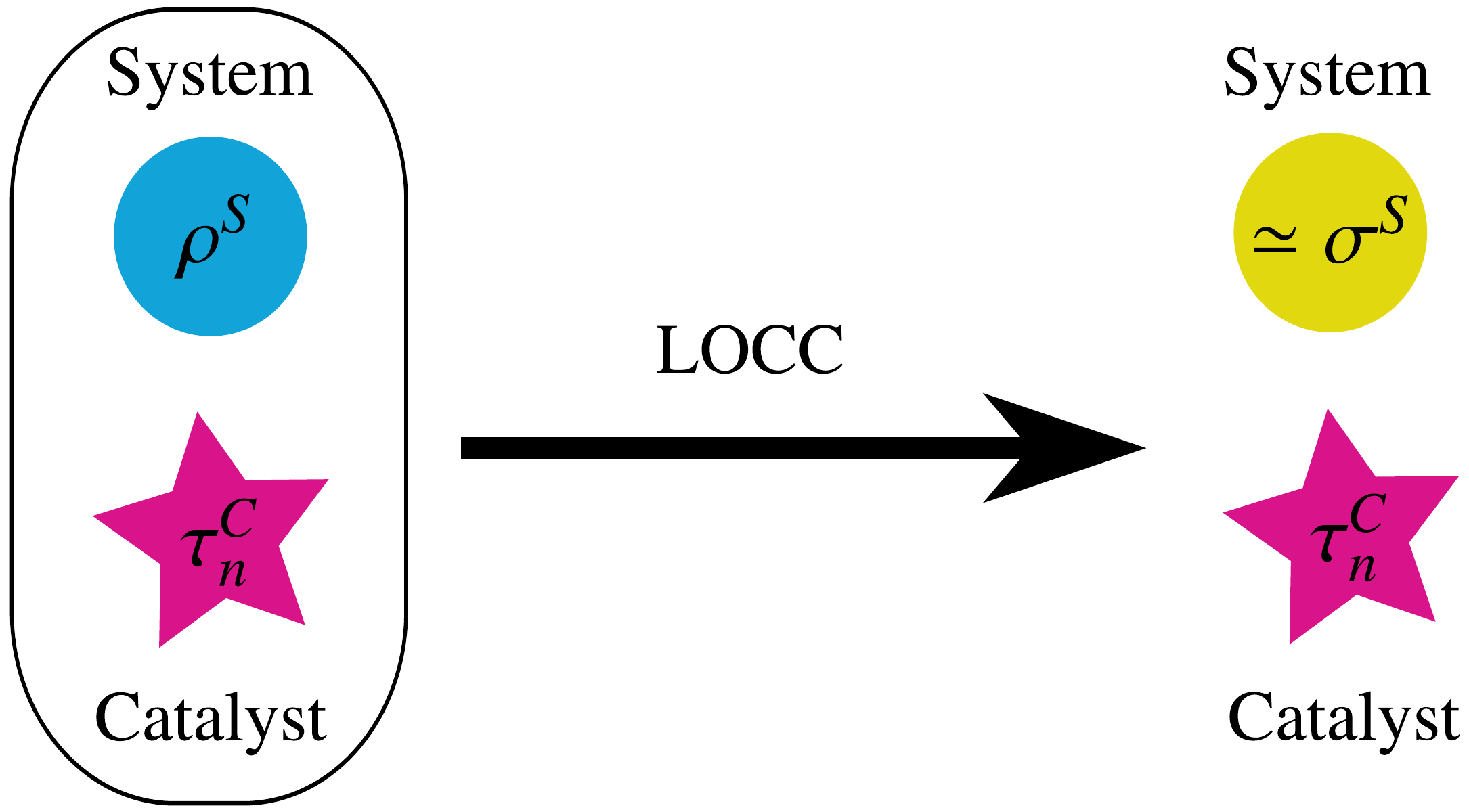}

\caption{Catalytic LOCC transformation from $\rho^S$ to $\sigma^S$ with a sequence of catalyst states $\{\tau_n^C\}$. The state of the catalyst does not change in the procedure, and the system becomes decoupled from the catalyst for $n \rightarrow \infty$. If $\rho^S$ and $\sigma^S$ are bipartite pure states, the transition is fully characterized by entanglement entropy of the states.}
\label{fig:CatalyticLOCC}
\end{figure}

In this Letter we consider \emph{catalytic LOCC transformations}. For a bipartite system $S = AB$, a catalytic LOCC transformation is defined as
\begin{equation}
    \rho^S \rightarrow \lim_{n \rightarrow \infty} \mathrm{Tr}_{C}\left[\Lambda_{n}\left(\rho^S \otimes \tau^C_n\right)\right]. \label{eq:catalytic}
\end{equation}
Here, $C=A'B'$ is a bipartite system of the catalyst, $\left\{\tau^C_n\right\}$ is a sequence of catalyst states, and $\{\Lambda_n\}$ is a sequence of LOCC protocols, see also Fig.~\ref{fig:CatalyticLOCC}. We require that for all $n$ the catalyst remains unchanged in the process: 
\begin{equation}
    \mathrm{Tr}_S\left[\Lambda_{n}\left(\rho^S \otimes \tau^C_n\right)\right] = \tau^C_n. \label{eq:CatalystUnchanged}
\end{equation}
In general, we do not bound the dimension of the catalyst, and we further require that the system decouples from the catalyst for large $n$. In particular, for a catalytic transformation from $\rho^S$ to $\sigma^S$ we require:
\begin{equation}
    \lim_{n\rightarrow \infty} \left|\left|\mu^{SC}_n - \sigma^S\otimes \tau^C_n\right|\right|_1 = 0, \label{eq:DecouplingMethods}
\end{equation}
where $||M||_1 = \Tr\sqrt{M^\dagger M}$ is the trace norm and $\mu^{SC}_n = \Lambda_{n}\left(\rho^S \otimes \tau^C_n\right)$ is the total final state. Eq.~(\ref{eq:CatalystUnchanged}) ensures that the catalyst remains unchanged in the procedure, and thus can be used for another process in the future. Moreover, Eq.~(\ref{eq:DecouplingMethods}) means that the correlations between the system and the catalyst can be made arbitrarily small, implying that multiple uses of the catalyst on independent systems will lead to a negligible amount of correlations between the systems involved.

We will now show that for pure states catalytic LOCC transformations are closely related to asymptotic LOCC transformations. In the asymptotic setting, the parties can operate on a large number of copies of the initial state simultaneously. The figure of merit for the process is the transformation rate, giving the maximal number of copies of $\ket{\phi}^{AB}$ achievable per copy of the initial state $\ket{\psi}^{AB}$. It has been shown in~\cite{BennettPhysRevA.53.2046} that the optimal rate is given by $H\left(\psi^A\right)/H\left(\phi^A\right)$, where $\psi^A$ and $\phi^A$ are the reduced states of $\ket{\psi}^{AB}$ and $\ket{\phi}^{AB}$, and $H(\rho) = - \mathrm{Tr}\left[\rho \log_2 \rho\right]$ is the von Neumann entropy. For pure states the von Neumann entropy of the reduced state is also a quantifier of entanglement, known as entanglement entropy~\cite{BennettPhysRevA.53.2046,VedralPhysRevLett.78.2275,HorodeckiRevModPhys.81.865}: $E\left(\ket{\psi}^{AB}\right) = H\left(\psi^A\right)$.  If the initial and the target state have the same entanglement entropy, then conversion $\ket{\psi}^{AB} \rightarrow \ket{\phi}^{AB}$ is possible with unit rate.

We are now ready to present the first result of this Letter. Alice and Bob can convert $\ket{\psi}^{AB}$ into $\ket{\phi}^{AB}$ via catalytic LOCC if and only if
\begin{equation}
    H\left(\psi^A\right) \geq H\left(\phi^A\right). \label{eq:Pure}
\end{equation}
This result means that for pure states catalysts enhance the transformation power of LOCC, making the transformations as powerful as in the asymptotic limit. The result in Eq.~(\ref{eq:Pure}) is a consequence of the following theorem, concerning transformations between quantum states $\rho^S$ and $\sigma^S$ of a general multipartite system $S$ via multipartite LOCC.

\begin{theorem} \label{thm:Main}
If $\rho^S$ can be transformed into $\sigma^S$ via asymptotic LOCC with unit rate, then there exists a catalytic LOCC protocol transforming $\rho^S$ into $\sigma^S$.
\end{theorem}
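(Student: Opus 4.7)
The plan is to turn a single use of the asymptotic rate-$1$ LOCC map into a catalytic protocol by randomising over a shared classical register that selects which of a small pool of copies of $\rho$ and $\ket{\phi}$ plays the role of ``the system''. The hypothesis supplies, for each large $n$, an LOCC map $\Gamma_n$ with $\|\Gamma_n(\rho^{\otimes n}) - \phi^{\otimes n}\|_1 \leq \epsilon_n$ and $\epsilon_n \to 0$, where I write $\phi := \ket{\phi}\!\bra{\phi}$. I define the catalyst on a shared classical register $R$ and $n$ multipartite slots by
\begin{equation*}
\tau_n^C = \frac{1}{n}\sum_{k=1}^{n} \ket{k}\!\bra{k}^R \otimes \rho^{\otimes (n-k)} \otimes \phi^{\otimes k},
\end{equation*}
so that, conditioned on $R = k$, the catalyst holds $n-k$ copies of $\rho$ and $k$ copies of $\ket{\phi}$; the register is perfectly classically correlated across the parties and so requires no shared entanglement.

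The LOCC protocol $\Lambda_n$ reads $R$, updates it to $R' = R-1 \bmod n$ (with $1\mapsto n$), and then acts on the slots as follows. For $R = k \geq 2$ it performs a local SWAP between the system $S$ and the catalyst slot currently holding a $\ket{\phi}$; this is a local unitary and a direct check shows that, conditional on $R=k$, it exactly maps $\rho^S \otimes \rho^{\otimes(n-k)} \otimes \phi^{\otimes k}$ to $\phi^S \otimes \rho^{\otimes(n-k+1)} \otimes \phi^{\otimes(k-1)}$, which is precisely $\phi^S$ tensored with the $R'=k-1$ block of $\tau_n^C$. For $R = 1$ it applies $\Gamma_n$ to the $n$ slots still in state $\rho$ (the system together with the first $n-1$ catalyst slots), producing $\phi^S \otimes \phi^{\otimes n}$ up to trace-distance error $\epsilon_n$ and matching the $R'=n$ block of $\phi^S \otimes \tau_n^C$. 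Averaging over the branches with their weights $1/n$ yields
\begin{equation*}
\bigl\|\Lambda_n(\rho^S \otimes \tau_n^C) - \phi^S \otimes \tau_n^C\bigr\|_1 \leq \epsilon_n/n \xrightarrow[n\to\infty]{} 0,
\end{equation*}
which is an approximately catalytic LOCC realisation of $\rho^S \to \ket{\phi}^S$ in the limit.

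The main obstacle is to upgrade this approximate catalyst invariance (of order $\epsilon_n/n$) to the exact invariance for every finite $n$ that the paper's definition demands. I would close the gap either by applying a Brouwer/Markov-type fixed-point argument to the induced self-map $\tau \mapsto \mathrm{Tr}_S[\Lambda_n(\rho^S \otimes \tau)]$ on the compact convex set of catalyst states, obtaining an exactly invariant $\tau_n^\star$ within trace distance $O(\epsilon_n)$ of $\tau_n^C$ (e.g.\ as a Cesaro average of iterates, exploiting the approximate periodicity of $\Lambda_n$), or by redefining the $R=n$ block of $\tau_n^C$ to be the genuine output of $\Gamma_n$ instead of $\phi^{\otimes n}$ and propagating the resulting small correction through the SWAP branches. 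A secondary point to handle is the possibility that unit rate gives only $n$-to-$m_n$ conversions with $m_n/n\to 1$: I would then pad $\tau_n^C$ with the missing $n-m_n$ copies of $\ket{\phi}$ and use them to absorb the slack in the bad branch, at an $o(1)$ cost in the final error.
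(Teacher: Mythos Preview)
Your construction has the right architecture: a classical register cycling through $n$ values, a catalyst holding a staircase of $\rho$'s and target-like states, a single application of the asymptotic map on one branch, and SWAPs on the others. The paper does essentially the same thing, but with one crucial difference: instead of loading the catalyst with the \emph{ideal} targets $\phi^{\otimes k}$, it loads it with the \emph{actual} marginals $\Gamma_{n-k} = \Tr_{S_{n-k+1}\cdots S_n}[\Gamma]$ of the true output $\Gamma = \Gamma_n(\rho^{\otimes n})$. With that choice the catalyst is returned \emph{exactly} by construction, because tracing out one more system from $\Gamma_{n+1-k}$ gives $\Gamma_{n-k}$ and the staircase telescopes. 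Your second proposed fix (``redefine the $R=n$ block to be the genuine output of $\Gamma_n$ \ldots\ and propagate through the SWAP branches'') is precisely this: once you propagate, every block becomes $\rho^{\otimes k-1}\otimes\Gamma_{n-k}$, which is the paper's catalyst. So the gap you flagged is real, but the repair you sketch in one line is the whole point of the Shiraishi--Sagawa trick the paper invokes.

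Two remarks on the rest. First, your fix \#1 (Brouwer/Markov fixed point via Ces\`aro averages) does not obviously give a fixed point within $O(\epsilon_n)$ of $\tau_n^C$: Ces\`aro averaging an approximately invariant point under a contraction controls $\|T(\bar\tau_N)-\bar\tau_N\|_1$, not $\|\bar\tau_N-\tau_0\|_1$, and the telescoping bound on the latter blows up like $N\epsilon_n$. If you instead compute the fixed point of your specific $\Lambda_n$ directly you will find it is exactly $\rho^{\otimes k-1}\otimes\Gamma_{n-k}$ (up to an idle slot), i.e.\ the paper's catalyst again, so the direct route is cleaner. Second, a genuine advantage of your framing: if you do establish $\|\tau_n^\star-\tau_n^C\|_1\le\epsilon_n$ (which is immediate once you write $\tau_n^\star$ explicitly, since $\|\Gamma_{n-k}-\phi^{\otimes(n-k)}\|_1\le\epsilon_n$), then your bound $\|\Lambda_n(\rho\otimes\tau_n^C)-\phi\otimes\tau_n^C\|_1\le\epsilon_n/n$ upgrades by triangle inequality to $\|\Lambda_n(\rho\otimes\tau_n^\star)-\phi\otimes\tau_n^\star\|_1=O(\epsilon_n)$, and decoupling follows in one line. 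The paper instead proves decoupling separately via a purity argument (Eqs.~(19)--(29)), which your route would bypass.
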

\begin{proof}

We consider a system $S$ consisting of $m$ parties and restricted to LOCC. Let all the parties share a state $\rho$. Moreover, we assume that for any $\varepsilon > 0$ there exists an integer $n$ and an LOCC transformation $\Lambda$ such that
\begin{equation}
    \Lambda\left(\rho^{\otimes{n}}\right) = \Gamma \,\,\,\mbox{and}\,\,\,\,D\left(\Gamma,\sigma^{\otimes{n}} \right) < \varepsilon, \label{asymptotic}
\end{equation}
where $D(\rho,\sigma) = \frac{1}{2}||\rho - \sigma||_1$ is the trace distance. The above relation implies that $\sigma$ is asymptotically achievable from $\rho$ with unit rate. We will now show that in this case there also exists a catalytic LOCC procedure transforming $\rho$ into $\sigma$. The following proof is inspired by techniques introduced very recently within quantum thermodynamics~\cite{shiraishi2020quantum}.

Consider a catalyst in the state
\begin{equation}
    \tau = \frac{1}{n}\sum_{k=1}^{n}\rho^{\otimes k-1} \otimes \Gamma_{n-k} \otimes \ket{k}\bra{k}. \label{eq:tau}
\end{equation}
The Hilbert space of the catalyst is in $S^{ \otimes{n-1}}\otimes K$, where $n$ is the integer introduced in Eq. (\ref{asymptotic}) and $K$ represents an auxiliary system of dimension $n$. For brevity, we denote the initial system $S$ as $S_{1}$, and $n-1$ copies of the same system which belong to the catalyst are denoted by $S_{2},\ldots , S_{n}$. Thus, the system $C$ of the catalyst is identified with $C = S_2 \ldots S_n K$. Moreover, $\Gamma$ is a quantum state on $S_{1}\otimes S_{2}\otimes \cdots \otimes S_{n}$, see also Eq.~(\ref{asymptotic}), and $\Gamma_{i}$ is the reduced state of $\Gamma$ on $S_{1}\otimes S_{2}\otimes \cdots\otimes S_{i}$. We further define $\Gamma_{0}=1$. The auxiliary system $K$ is maintained by Alice, serving as a register with a Hilbert space of dimension $n$ with basis $\{\ket{k}, k \in [1,n]\}$. 

Consider now the following LOCC protocol acting on the system and the catalyst:

(i) Alice performs a rank-1 projective measurement on the auxiliary system $K$ in the basis $\ket{k}$. She then communicates the outcome of the measurement to all the other parties. If Alice obtains the outcome $n$, all parties perform the LOCC protocol $\Lambda$ given in Eq. (\ref{asymptotic}) on $S_{1}\otimes S_{2}\otimes\cdots\otimes S_{n}$. For any other outcome of Alice's measurement the parties do nothing.

(ii) Alice applies a unitary on the auxiliary system that converts $\ket{n}\xrightarrow{}\ket{1}$ and $\ket{i}\xrightarrow{}\ket{i+1}$.

(iii) Finally, all the parties apply a SWAP unitary on their parts of ($S_{i}$, $S_{i+1}$) and  ($S_{1}$, $S_{n}$), which shifts $S_{i}\xrightarrow{} S_{i+1}$ and $S_{n}\xrightarrow{} S_{1}$.

The initial state of the system and the catalyst is given by
\begin{equation}
    \rho \otimes \tau = \frac{1}{n}\sum_{k=1}^{n}\rho^{\otimes k} \otimes \Gamma_{n-k} \otimes \ket{k}\bra{k}.
\end{equation}
After applying step (i), the initial state transforms to
\begin{equation}
    \mu^{i} = \frac{1}{n}\sum_{k=1}^{n-1}\rho^{\otimes k} \otimes \Gamma_{n-k} \otimes \ket{k}\bra{k} + \frac{1}{n}\Gamma\otimes \ket{n}\bra{n}.
\end{equation}
In step (ii), $\mu^{i}$ transforms to $\mu^{ii}$, where 
\begin{equation}
    \mu^{ii} = \frac{1}{n}\sum_{k=1}^{n}\rho^{\otimes k-1} \otimes \Gamma_{n+1-k} \otimes \ket{k}\bra{k}. \label{eq:Muii}
\end{equation}
Note that tracing out $S_{n}$ from $\mu^{ii}$ gives $\tau$, which is the initial state of the catalyst, see Eq.~(\ref{eq:tau}). Therefore, using step (iii), we transform $\mu^{ii}$ to the final state $\mu$ having the property $\Tr_{S}[\mu]$ = $\tau$. This proves that the state of the catalyst does not change in this procedure.

To complete the proof, we will now show that for any $\varepsilon >0$ there is an integer $n$ such that 
\begin{equation}
    D\left(\mu^{SC}, \sigma^S \otimes \tau^C\right) < 2\varepsilon, \label{eq:DecouplingMain}
\end{equation}
where $\mu^{SC}$ is the total final state given above in this proof. For this, recall that $\mu^{SC}$ is equivalent to the state $\mu^{ii}$ in Eq.~(\ref{eq:Muii}) up to a cyclic SWAP. This implies that our figure of merit $D\left(\mu^{SC},\sigma^S \otimes \tau^C\right)$ is equal to $D\left(\mu^{ii},\nu\right)$, where the state $\nu$ is given as
\begin{equation}
    \nu = \frac{1}{n}\sum_{k=1}^{n}\rho^{\otimes k-1} \otimes \Gamma_{n-k} \otimes \sigma \otimes \ket{k}\bra{k}.
\end{equation}
We further obtain 
\begin{align}
    D\left(\mu^{ii} , \nu\right) &= \frac{1}{n} \sum_{k=1}^{n} D\left(\Gamma_{n+1-k}, \Gamma_{n-k} \otimes \sigma\right) \\
                      &\leq \frac{1}{n}\sum_{k=1}^{n}D\left(\Gamma_{n+1-k}, \sigma^{\otimes(n+1-k)}\right)  \nonumber \\ 
                      & + \frac{1}{n}\sum_{k=1}^{n}D\left(\Gamma_{n-k}\otimes \sigma, \sigma^{\otimes(n+1-k)}\right) < 2 \varepsilon, \nonumber
\end{align}
where in the first inequality we used the triangle inequality and in the second inequality we used Eq.~(\ref{asymptotic}) together with the monotonicity of trace distance under partial trace. This proves Eq.~(\ref{eq:DecouplingMain}), showing also that the system and the catalyst decouple in the procedure, and that Eq.~(\ref{eq:DecouplingMethods}) is fulfilled. This completes the proof of the theorem. We also note that a family of catalyst states similar to Eq.~(\ref{eq:tau}) has been previously considered in~\cite{Lipka-Bartosik2006.16290}.
\end{proof}

In the bipartite setting, Theorem~\ref{thm:Main} directly implies that $\ket{\psi}^{AB}$ can be transformed into $\ket{\phi}^{AB}$ via catalytic LOCC if Eq.~(\ref{eq:Pure}) is fulfilled. As we will see in the following by using properties of entanglement quantifiers~\cite{Christandl_2004,Brandao2011}, a transformation is not possible if Eq.~(\ref{eq:Pure}) is violated. 

For this, we will show that the squashed entanglement, introduced in \cite{Christandl_2004}, is monotonic under catalytic LOCC transformations. For bipartite quantum states $\rho^{AB}$ the squashed entanglement is defined as~\cite{Christandl_2004}
\begin{equation}
    E_{sq}\left(\rho^{AB}\right) = \text{inf}\left\{\frac{1}{2}I(A;B|E): \rho^{ABE}\,\,\text{extension of}\,\, \rho^{AB} \right\},
\end{equation}
where the infimum is taken over all quantum states $\rho^{ABE}$ with $\rho^{AB}$ = $\Tr_{E} \left(\rho^{ABE}\right)$ and 
\begin{equation}
    I(A; B|E) = H\left(\rho^{AE}\right) + H\left(\rho^{BE}\right) - H\left(\rho^{ABE}\right) - H\left(\rho^{E}\right)
\end{equation}
is the quantum conditional mutual information of $\rho^{ABE}$. 

We use the following properties of the squashed entanglement~\cite{Christandl_2004}: (a) $E_{sq}$ is an entanglement monotone, i.e. it does not increase under LOCC. (b) $E_{sq}$ is superadditive in general and additive on tensor products:
\begin{equation}
    E_{sq}\left(\rho^{AA'BB'}\right) \geq E_{sq}\left(\rho^{AB}\right) +E_{sq}\left(\rho^{A'B'}\right)
\end{equation}
and equality holds true if $\rho^{AA'BB'} = \rho^{AB} \otimes \rho^{A'B'}$. (c) For a pure state $\ket{\psi}^{AB}$ squashed entanglement is equal to the entanglement entropy, i.e., the entropy of the reduced state: $E_{sq}\left(\ket{\psi}^{AB}\right) = H\left(\psi^{A}\right)$. We are now ready to prove that the squashed entanglement is monotonic under catalytic LOCC.
\begin{theorem} \label{thm:SquashedEntanglement}
If a bipartite state $\rho^{AB}$ can be transformed into another state $\nu^{AB}$ via catalytic LOCC, then 
\begin{equation}
    E_{sq}\left(\rho^{AB}\right) \geq E_{sq}\left(\nu^{AB}\right).
\end{equation}
\end{theorem}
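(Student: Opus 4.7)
The plan is to chain together the four listed properties of squashed entanglement with the definition of catalytic LOCC from Eqs.~(\ref{eq:catalytic}) and (\ref{eq:DecouplingMethods}). Fix a realization of the catalytic transformation $\rho^{AB}\to\ket{\phi}^{AB}$, given by a sequence of catalyst states $\tau_n^{A'B'}$ and LOCC maps $\Lambda_n$, and set $\mu_n^{ABA'B'}=\Lambda_n(\rho^{AB}\otimes\tau_n^{A'B'})$. The strategy is to establish, for each finite $n$, the inequality $E_{sq}(\rho^{AB})\ge E_{sq}(\mu_n^{AB})$, and then to send $n\to\infty$ using continuity of $E_{sq}$ near pure states.

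For the finite-$n$ bound, I would first combine additivity on tensor products (property b) with monotonicity under LOCC (property a) to obtain
\begin{equation*}
E_{sq}(\rho^{AB})+E_{sq}(\tau_n^{A'B'})=E_{sq}\!\left(\rho^{AB}\otimes\tau_n^{A'B'}\right)\ge E_{sq}\!\left(\mu_n^{ABA'B'}\right).
\end{equation*}
Next, I would apply the general superadditivity in property (b) to the output state and use the catalyst-preservation condition $\mu_n^{A'B'}=\tau_n^{A'B'}$ to get $E_{sq}(\mu_n^{ABA'B'})\ge E_{sq}(\mu_n^{AB})+E_{sq}(\tau_n^{A'B'})$. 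Subtracting the common finite term $E_{sq}(\tau_n^{A'B'})$ on both sides leaves $E_{sq}(\rho^{AB})\ge E_{sq}(\mu_n^{AB})$ for every $n$. Finally, the decoupling condition Eq.~(\ref{eq:DecouplingMethods}) together with contractivity of the trace norm under partial trace yields $\mu_n^{AB}\to\ket{\phi}\!\bra{\phi}^{AB}$ in trace distance, so property (d) gives $E_{sq}(\mu_n^{AB})\to E_{sq}(\ket{\phi}^{AB})$, completing the argument.

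The main obstacle is justifying the cancellation of $E_{sq}(\tau_n^{A'B'})$: since the catalyst dimension is not bounded uniformly in $n$, this quantity may in principle grow without bound, and a naive attempt to pass to the limit first and subtract afterwards would fail. The clean way around it is to perform the cancellation at each finite $n$, where $\tau_n^{A'B'}$ still lives on a finite-dimensional Hilbert space so that $E_{sq}(\tau_n^{A'B'})$ is bounded (by $\log_2$ of the smaller local catalyst dimension) and is therefore genuinely subtractable. A secondary check is that property (d) as stated is applicable here: the decoupling condition supplies exactly the trace-norm convergence to a pure target state that the continuity claim requires.
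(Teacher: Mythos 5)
Your proposal is correct and follows essentially the same route as the paper: additivity plus LOCC monotonicity to bound the total output state, superadditivity plus catalyst preservation to split off the catalyst term, cancellation of the (finite) $E_{sq}(\tau_n^{A'B'})$, and continuity of squashed entanglement near pure states to pass to the target $\ket{\phi}^{AB}$. Your explicit justification that the cancellation is done at each finite $n$, where the catalyst's squashed entanglement is finite, is a point the paper leaves implicit, but it does not change the argument.
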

\begin{proof}
Assume that for any $\varepsilon > 0$ there exists a catalyst state $\tau^{A'B'}$ and an LOCC protocol $\Lambda$ such that the final state $\sigma^{AA'BB'}=\Lambda(\rho^{AB}\otimes\tau^{A'B'})$ has the properties 
\begin{align}
D\left(\Tr_{A'B'}\left[\sigma^{AA'BB'}\right],\nu^{AB}\right)<\varepsilon, \\
\Tr_{AB}\left[\sigma^{AA'BB'}\right]=\tau^{A'B'}.
\end{align}
Using the properties (a) and (b) of the squashed entanglement, we find
\begin{equation}
    E_{sq}\left(\sigma^{AA'BB'}\right)\leq E_{sq}\left(\rho^{AB}\right)+E_{sq}\left(\tau^{A'B'}\right) \label{1ineq}
\end{equation}
and also
\begin{equation}
    E_{sq}\left(\sigma^{AA'BB'}\right)\geq E_{sq}\left(\Tr_{A'B'}\left[\sigma^{AA'BB'}\right]\right)+E_{sq}\left(\tau^{A'B'}\right). \label{2ineq}
\end{equation}
From Eqs. (\ref{1ineq}) and (\ref{2ineq}) it follows
\begin{equation}
    E_{sq}\left(\rho^{AB}\right)\geq E_{sq}\left(\Tr_{A'B'}\left[\sigma^{AA'BB'}\right]\right).
\end{equation}
If $\Tr_{A'B'}\left[\sigma^{AA'BB'}\right]$ can be made arbitrarily close to $\nu^{AB}$ in trace distance, then by continuity of squashed entanglement~\cite{Alicki_2004} we get
$E_{sq}\left(\rho^{AB}\right) \geq  E_{sq}\left(\nu^{AB}\right)$, and the proof is complete.
\end{proof}
Combining Theorems~\ref{thm:Main} and \ref{thm:SquashedEntanglement}, we conclude that a pure state $\ket{\psi}^{AB}$ can be transformed into another pure state $\ket{\phi}^{AB}$ via catalytic LOCC if and only if $H\left(\psi^A\right) \geq H\left(\phi^A\right)$, as claimed in Eq.~(\ref{eq:Pure}). It is instrumental to compare our results to the results in~\cite{Turgut_2007,Klimesh0709.3680}, where exact catalytic transformations between pure states have been studied, using a catalyst which is also in a pure state. If no error is allowed at any stage of the protocol, the transitions are characterized by a set of entropic inequalities~\cite{Turgut_2007,Klimesh0709.3680}. In this Letter -- in contrast to~\cite{Turgut_2007,Klimesh0709.3680} -- we allow for a catalyst in a mixed state. While the state of the catalyst is not changed in the process, we also allow for an error in the state of the system, requiring that the error vanishes in the limit of large catalysts. We see that catalytic transformations between bipartite pure states are fully characterized by the entanglement entropy. This gives an operational interpretation for the von Neumann entropy in the single copy scenario.

Theorem~\ref{thm:Main} has surprising implications for general bipartite states, when it comes to catalytic entanglement distillation and dilution~\cite{BennettPhysRevA.53.2046,BennettPhysRevLett.76.722}. In particular, via catalytic LOCC it is possible to convert any distillable state $\rho^{AB}$ into an entangled pure state $\ket{\psi}^{AB}$ having entanglement entropy $E\left(\ket{\psi}^{AB}\right) = E_\mathrm{d}\left(\rho^{AB}\right)$, where $E_\mathrm{d}$ is the distillable entanglement~\cite{PlenioMeasures,HorodeckiRevModPhys.81.865}. On the other hand, via catalytic LOCC it is possible to create any quantum state $\rho^{AB}$ from a pure state $\ket{\psi}^{AB}$ with entanglement entropy $E\left(\ket{\psi}^{AB}\right) = E_\mathrm{c}\left(\rho^{AB}\right)$, where $E_\mathrm{c}$ is the entanglement cost~\cite{PlenioMeasures,HorodeckiRevModPhys.81.865}.

\begin{figure*}
\includegraphics[width=1.7\columnwidth]{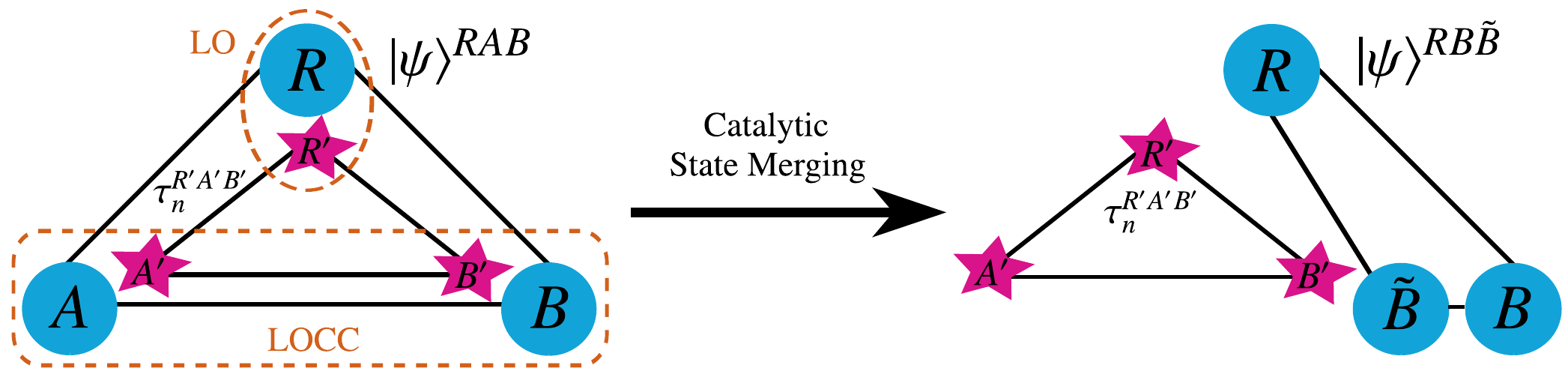}

\caption{Catalytic quantum state merging. Alice, Bob, and Referee share a single copy of $\ket{\psi}^{RAB}$. Alice aims to send her part of the state to Bob by using catalytic LOCC, and the Referee can apply local unitary transformations (LO). The process is completely characterized by the quantum conditional entropy $H(A|B)$, see the main text for more details.}
\label{fig:Merging}
\end{figure*}

Remarkably, Theorem~\ref{thm:Main} holds not only for bipartite state transformations, but also for multipartite LOCC protocols. Here the goal is to convert a multipartite state $\rho^S$ into another state $\sigma^S$ via multipartite LOCC. As a consequence, it allows us to translate a broad range of asymptotic results in entanglement theory to a corresponding result on the single-copy level. We show it explicitly for a variation of quantum state merging, which we term \emph{catalytic quantum state merging}. Before we present this task, we review the standard quantum state merging procedure in the following.

In quantum state merging~\cite{qsm,QSM_and_negative_info}, Alice, Bob, and Referee share asymptotically many copies of a quantum state $\ket{\psi}^{RAB}$. The goal of the process is to send Alice's part of the state to Bob, while preserving correlations with the Referee. Alice and Bob can perform LOCC protocols and share additional singlets. As was shown in~\cite{qsm,QSM_and_negative_info}, the performance of this process is characterized by the quantum conditional entropy
\begin{equation}
    H(A|B) = H\left(\psi^{AB}\right) - H\left(\psi^B\right).
\end{equation}
For $H(A|B) > 0$ quantum state merging can be performed if Alice and Bob share additional singlets at rate $H(A|B)$, and merging is not possible if less singlets are available. For $H(A|B) \leq 0$ Alice and Bob can perform quantum state merging with LOCC, while additionally gaining singlets at rate $-H(A|B)$. Remarkably, quantum state merging gives an operational meaning to the quantum conditional entropy, regardless whether $H(A|B)$ is positive or negative.

We are now ready to define catalytic quantum state merging, giving the quantum conditional entropy an operational meaning also in the single-copy regime. Here, Alice, Bob, and Referee share one copy of the state $\ket{\psi}^{RAB}$, and can use additional catalysts in arbitrary states $\tau_n^{R'A'B'}$. While in standard quantum state merging the Referee is fully inactive, in catalytic quantum state merging we allow the Referee to perform local unitaries. However, communication between the Referee and the other parties is not required, see also Fig.~\ref{fig:Merging}. The goal is to merge the single copy of $\ket{\psi}^{RAB}$ on Bob's side without changing the state of the catalyst for all $n$, and with decoupling of the catalyst in the limit $n\rightarrow \infty$. We find that for $H(A|B) > 0$ catalytic quantum state merging can be performed if Alice and Bob additionally share a pure entangled state with entanglement entropy $H(A|B)$. This procedure is optimal: merging is not possible if a pure state with a smaller entanglement entropy is provided. If $H(A|B) \leq 0$, then catalytic state merging can be performed without extra entanglement. In the end of the process, Alice and Bob can gain an additional pure state with entanglement entropy $-H(A|B)$. Also this procedure is optimal: it is not possible to achieve merging and gain a pure state with entanglement entropy exceeding~$-H(A|B)$. We refer to the Supplemental Material for the proofs and more details.

As a final example we discuss assisted entanglement distillation~\cite{DiVincenzo10.1007/3-540-49208-9_21,SmolinPhysRevA.72.052317}, where three parties, Alice, Bob, and Charlie, share a pure state $\ket{\psi}^{ABC}$. By performing LOCC involving all parties, their aim is to extract singlets between Alice and Bob. In the asymptotic setup, the optimal singlet rate is given by $\min\left\{H\left(\psi^A\right),H\left(\psi^B\right)\right\}$~\cite{SmolinPhysRevA.72.052317}. Correspondingly, \emph{catalytic assisted entanglement distillation} involves one copy of $\ket{\psi}^{ABC}$. By applying catalytic LOCC, the parties aim to establish a state $\ket{\phi}^{AB}$ shared by Alice and Bob, having entanglement entropy as large as possible. We find that $\min\left\{H\left(\psi^A\right),H\left(\psi^B\right)\right\}$ corresponds to the maximal entanglement entropy achievable from $\ket{\psi}^{ABC}$ in this procedure. The technical details can be found in the Supplemental Material.

In summary, we have shown that entanglement entropy plays a fundamental role in quantifying the amount of entanglement available in a bipartite pure state to be used for quantum information processing. This holds in the single-copy setup, where a single instance of a pure quantum state is available, supported by a suitable catalyst which remains unchanged in the process.  Our methods are directly applicable to studying the role of catalysis in more general quantum information tasks. We have demonstrated this explicitly for quantum state merging, and assisted entanglement distillation. It is reasonable to assume that similar results will hold for other quantum information protocols, including also the recently developed procedures for entangled state transformations in multipartite setups~\cite{StreltsovPhysRevLett.125.080502}. 

For bipartite pure entangled states, our methods allow to construct catalyst states which achieve optimal performance, as described by Eq.~(\ref{eq:Pure}). These catalyst states are tailored to the specific state transition, allowing us to implement the transformation repeatedly with an arbitrary small error.
An interesting open question is whether optimal performance can also be achieved with a universal catalyst. Such possibility has been pointed out in~\cite{Lipka-Bartosik2006.16290} for quantum thermodynamics and related theories. Extension of these tools to the setup introduced in this Letter is beyond the scope of our work, and we leave it open for future research. Our results suggest a full equivalence between asymptotic and catalytic entanglement theory.

\emph{Note added}: The authors of~\cite{Lipka-Bartosik2102.11846} have used similar techniques to study the role of entangled catalysts for quantum teleportation, showing that catalysts can significantly enhance the teleportation fidelity.

We thank Patryk Lipka-Bartosik and Paul Skrzypczyk for discussion and insightful comments on our manuscript. This work was supported by the ``Quantum Optical Technologies'' project, carried out within the International Research Agendas programme of the Foundation for Polish Science co-financed by the European Union under the European Regional Development Fund.

\bibliography{literature}

\clearpage

\section{Supplemental Material}

\subsection{Catalytic LOCC protocols in tripartite setups}
We will now consider a tripartite setup, developing tools which will serve as a basis for catalytic quantum state merging. Similar to the state merging setup~\cite{qsm, QSM_and_negative_info}, we consider three parties (Alice, Bob and Referee) sharing a tripartite state $\rho = \rho^{RAB}$. Assume now that by applying asymptotic LOCC between Alice and Bob, it is possible to asymptotically convert $\rho$ into another tripartite state $\sigma = \sigma^{RAB}$. The following theorem establishes a connection between this setup and catalytic LOCC. 

\begin{proposition}
If  $\rho$ can be converted into  $\sigma$ via asymptotic LOCC between Alice and Bob with unit rate, then $\rho$ can be converted into $\sigma$ by applying catalytic LOCC between Alice and Bob and a unitary on Referee's side. \label{merging theorem}
\end{proposition}
\begin{proof}
The proof follows similar reasoning as the proof of Theorem~1 in the main text. If  $\rho$ can be converted into  $\sigma$ via asymptotic LOCC between Alice and Bob with unit rate, then for any $\varepsilon > 0$ there exists an integer $n$ and an LOCC protocol $\Lambda$ between Alice and Bob such that 
\begin{equation}
    \Lambda\left[\rho^{\otimes n}\right]=\Gamma\,\,\,\,\,\mathrm{and}\,\,\,\,D\left(\Gamma,\sigma^{\otimes n}\right)<\varepsilon. \label{asymptotic2}
\end{equation}
We now consider a catalyst in the state 
\begin{equation}
    \tau = \frac{1}{n}\sum_{k=1}^{n}\rho^{\otimes k-1} \otimes \Gamma_{n-k} \otimes \ket{k}\bra{k}.
\end{equation}
Also in this case the Hilbert space of the catalyst is in $S^{ \otimes{n-1}}\otimes K$, where $S = RAB$ now corresponds to the tripartite system of Alice, Bob and Referee. Again, we denote the $n$ copies of the same systems as $S_{1},\ldots , S_{n}$, where $S_1$ corresponds to the system $S$, and the state of the catalyst is acting on $S_2\otimes \cdots \otimes S_n \otimes K$. The operator $\Gamma$ acts on $S_{1}\otimes S_{2}\otimes \cdots \otimes S_{n}$ and $\Gamma_{i}$ ($i \in \{1,\ldots,n\}$) is the reduced state of $\Gamma$ on $S_{1}\otimes S_{2}\otimes \cdots\otimes S_{i}$. Moreover, $K$ is a register on Alice's side.

We now follow a procedure very similar to the one in the proof of Theorem~1 in the main text.

(i) Alice performs a rank-1 projective measurement on the register $K$ in the basis $\ket{k}$. She then communicates the outcome of the measurement to Bob. If Alice obtains the outcome $n$, Alice and Bob perform the LOCC protocol $\Lambda$ given in Eq. (\ref{asymptotic2}). For any other outcome of Alice's measurement the parties do nothing.

(ii) Alice applies a unitary on the auxiliary system that converts $\ket{n}\xrightarrow{}\ket{1}$ and $\ket{i}\xrightarrow{}\ket{i+1}$.

(iii) Alice, Bob and Referee apply a SWAP unitary on their parts of ($S_{i}$, $S_{i+1}$) and  ($S_{1}$, $S_{n}$) that shift $S_{i}\xrightarrow{} S_{i+1}$ and $S_{n}\xrightarrow{} S_{1}$. Note that communication with the Referee is not necessary, the Referee applies the SWAP operations independently of the procedure performed by Alice and Bob.

By the same reasoning as in the proof of Theorem~1, we see that the first subsystem $S_1$ of the final state is $\varepsilon$ close to $\sigma$ while the catalyst remains unchanged. Moreover, the subsystem $S_1$ decouples from the catalyst in the limit $n \rightarrow \infty$, which is proven exactly in the same way as in Theorem~1, see also the discussion below Eq.~(10) in the main text.
\end{proof}

\subsection{Catalytic quantum state merging}
In quantum state merging (QSM) \cite{qsm, QSM_and_negative_info}, we assume that Alice, Bob and Referee share asymptotically many copies of a pure quantum state $\ket{\psi}^{RAB}$. By applying LOCC operations, Alice and Bob aim to transfer the state of Alice to Bob while preserving correlations with Referee, i.e., the final state $\ket{\psi}^{RBB'}$ is the same as $\ket{\psi}^{RAB}$ up to relabelling of $A$ and $B'$. In the following, we describe three possible scenarios. 

a) In the asymptotic limit where many copies of the state $\ket{\psi}^{RAB}$ are available, QSM is possible if the conditional entropy is zero \cite{qsm, QSM_and_negative_info}, i.e.,
\begin{equation}
    H_{\psi}\left(A|B\right) = H\left(\psi^{AB}\right) - H\left(\psi^{B}\right)=0.
\end{equation}
This means, if $H_{\psi}(A|B) = 0$, there exists an LOCC protocol taking $\ket{\psi}^{RAB}$ arbitrarily close to $\ket{\psi}^{RBB'}$
\begin{equation}
   \left(\ket{\psi}^{RAB}\right)^{\otimes n}\xrightarrow[LOCC]{\varepsilon}\left(\ket{\psi}^{RBB'}\right)^{\otimes n}, \label{ASYM-MERGING1}
\end{equation}
where $\varepsilon$ above the arrow represents that the final state is  $\varepsilon$ close in trace distance to $\left(\ket{\psi}^{RBB'}\right)^{\otimes n}$.

b) If $H_{\psi}\left(A|B\right)>0$, then we define $\ket{\psi'}^{RA\tilde A B\tilde B}$ = $\ket{\psi}^{RAB} \otimes \ket{\phi_1}^{\tilde{A}\tilde{B}}$, where $\ket{\phi_1}^{\tilde{A}\tilde{B}}$ is a shared entangled state between Alice and Bob with entanglement entropy $E\left(\ket{\phi_1}^{\tilde{A}\tilde{B}}\right) = H_{\psi}\left(A|B\right)$. This implies, $H_{\psi'}\left(A\tilde A|B \tilde B\right)$ = 0. Therefore, from Eq. (\ref{ASYM-MERGING1}) we see that Alice and Bob can successfully merge the state $\ket{\psi'}^{RA \tilde A B \tilde B}$:
\begin{equation}
 \left(\ket{\psi'}^{RA \tilde A B \tilde B}\right)^{\otimes n}\xrightarrow[LOCC]{\varepsilon}\left(\ket{\psi}^{RBB'}\right)^{\otimes n}. \label{asym-merging2}
\end{equation}

c) If $H_{\psi}\left(A|B\right)<0$, the following transformation is achievable via LOCC between Alice and Bob~\cite{qsm, QSM_and_negative_info}:
\begin{equation}
    \left(\ket{\psi}^{RAB}\right)^{\otimes n}\xrightarrow[LOCC]{\varepsilon}\left(\ket{\psi}^{RBB'}\right)^{\otimes n}\otimes\left(\ket{\phi^{+}}^{\tilde{A}\tilde{B}}\right)^{\otimes-H_{\psi}\left(A|B\right)n}, \label{asym-merging3}
\end{equation}
with the Bell state $\ket{\phi^{+}}=(\ket{0}\ket{0}+\ket{1}\ket{1})/\sqrt{2}$. Additionally, we know that~\cite{BennettPhysRevA.53.2046} 
\begin{equation}
  \left(\ket{\phi^{+}}^{\tilde{A}\tilde{B}}\right)^{\otimes-H_{\psi}\left(A|B\right)n}\xrightarrow[LOCC]{\varepsilon}\left(\ket{\phi_{2}}^{\tilde{A}\tilde{B}}\right)^{\otimes n},\label{asym-merging4}
\end{equation}
where $\ket{\phi_{2}}^{\tilde{A}\tilde{B}}$ is a bipartite pure state with entanglement entropy $E\left(\ket{\phi_{2}}^{\tilde{A}\tilde{B}}\right) = - H_{\psi}\left(A|B\right)$. Therefore, from Eqs. (\ref{asym-merging3}) and~(\ref{asym-merging4}) we see that the following transformation is achievable via LOCC between Alice and Bob:
\begin{equation}
   \left(\ket{\psi}^{RAB}\right)^{\otimes n}\xrightarrow[LOCC]{\varepsilon}\left(\ket{\psi}^{RBB'}\otimes\ket{\phi_{2}}^{\tilde{A}\tilde{B}}\right)^{\otimes n}. \label{asym-merging5}
\end{equation}

Equipped with these tools we are now ready to discuss catalytic quantum state merging. Recall that in catalytic QSM, we allow LOCC operations between Alice and Bob, and the Referee can also perform local unitaries. However, no communication between the Referee and the other parties is allowed. This is exactly the setup considered in Proposition~\ref{merging theorem}, and we will make use of this result in the following. 

An immediate consequence of Proposition~\ref{merging theorem} and Eq.~(\ref{asym-merging5}) is that for 
$H_\psi(A|B) < 0$ catalytic state merging from Alice to Bob is possible. Additionally, Alice and Bob can obtain an entangled state $\ket{\phi_{2}}^{\tilde{A}\tilde{B}}$ with entanglement entropy
$E(\ket{\phi_2}^{\tilde{A}\tilde{B}}) =  -H_\psi(A|B)$.
 
In the following, we prove by contradiction that this is the optimal value one can achieve. Let us assume that there exists a catalytic LOCC procedure such that
\begin{equation}
    E\left(\ket{\phi_{2}}^{\tilde{A}\tilde{B}}\right)> H\left(\psi^{B}\right)-H\left(\psi^{AB}\right). \label{ineq1}
\end{equation}
Consider now the squashed entanglement~\cite{Christandl_2004,Brandao2011} between Bob and the rest of the system in the initial state $\ket{\psi}^{RAB}$:
\begin{equation}
    E_{sq}^{B|AR}\left(\ket{\psi}^{RAB}\right) = H\left(\psi^{B}\right). \label{initial_ent1}
\end{equation}
On the other hand, the squashed entanglement between Bob and the rest of the system in the target state $\ket{\psi}^{RBB'}\otimes\ket{\phi_{2}}^{\tilde{A}\tilde{B}}$ is given by
\begin{equation}
  E_{sq}^{BB'\tilde{B}|\tilde{A}R} \left(\ket{\psi}^{RBB'}\otimes\ket{\phi_{2}}^{\tilde{A}\tilde{B}}\right) = H\left(\psi^{AB}\right)+E\left(\ket{\phi_{2}}^{\tilde{A}\tilde{B}}\right). \label{final_ent1}
\end{equation}
From Eqs. (\ref{ineq1}), (\ref{initial_ent1}), and (\ref{final_ent1}), we obtain
\begin{equation}
     E_{sq}^{BB'\tilde{B}|\tilde{A}R} \left(\ket{\psi}^{RBB'}\otimes\ket{\phi_{2}}^{\tilde{A}\tilde{B}}\right) >  H\left(\psi^{B}\right) = E_{sq}^{B|AR}\left(\ket{\psi}^{RAB}\right).
\end{equation}
This means that the squashed entanglement has increased in the process. This is a contradiction to Theorem~2 in the main text, showing that for pure states squashed entanglement cannot increase under catalytic LOCC. 

In the remaining case $H_\psi(A|B) \geq 0$, from Proposition \ref{merging theorem} and Eq.~(\ref{asym-merging2}) it directly follows that catalytic QSM is possible when Alice and Bob are provided with an additional state $\ket{\phi_{1}}^{\tilde{A}\tilde{B}}$ with entanglement entropy 
$E\left(\ket{\phi_{1}}^{\tilde{A}\tilde{B}}\right) = H_\psi(A|B)$.
We now show that this is the minimal entanglement entropy needed to perform catalytic QSM. Again we use the properties of the squashed entanglement to prove this. The squashed entanglement between Bob and the other parties in the initial state $\ket{\psi}^{RAB} \otimes \ket{\phi_1}^{\tilde{A}\tilde{B}}$ is given by
\begin{equation}
    E_{sq}^{B\tilde B|A\tilde A R}\left(\ket{\psi}^{RAB} \otimes \ket{\phi_1}^{\tilde{A}\tilde{B}} \right) = H\left(\psi^{B}\right) + E\left(\ket{\phi_{1}}^{\tilde{A}\tilde{B}}\right).\label{initial_ent2}
\end{equation}
For the target state $\ket{\psi}^{RBB'}$ we obtain
\begin{equation}
  E_{sq}^{BB'|R} \left(\ket{\psi}^{RBB'} \right) = H\left(\psi^{AB}\right). \label{final_ent2}
\end{equation}
Using again the fact that squashed entanglement cannot increase under catalytic LOCC, we have
\begin{equation}
    E_{sq}^{B\tilde B|A\tilde A R}\left(\ket{\psi}^{RAB} \otimes \ket{\phi_1}^{\tilde{A}\tilde{B}} \right) \geq E_{sq}^{BB'|R} \left(\ket{\psi}^{RBB'} \right). \label{non_increase}
\end{equation}
Hence, from Eqs. (\ref{initial_ent2}), (\ref{final_ent2}) and (\ref{non_increase}), we get an achievable lower bound on the entanglement entropy of $\ket{\phi_1}^{\tilde{A} \tilde{B}}$:
\begin{align}
     E\left(\ket{\phi_1}^{\tilde{A} \tilde{B}}\right) \geq  H\left(\psi^{AB}\right)-H\left(\psi^{B}\right).
\end{align}
\bigskip

\subsection{Catalytic assisted entanglement distillation}
Consider now three parties, Alice, Bob, and Charlie, sharing a pure state $\ket{\psi}^{ABC}$. By performing catalytic LOCC between all the parties, they aim to convert $\ket{\psi}^{ABC}$ into a state $\ket{\phi}^{AB}$ which has maximal possible entanglement entropy. This task is analogous to assisted entanglement distillation, which has been previously studied in the asymptotic setting~\cite{SmolinPhysRevA.72.052317}. 

We will now show that the optimal procedure is for Charlie to merge his state either with Alice or with Bob. For this, consider the corresponding conditional mutual information
\begin{subequations}
\begin{align}
    H_\psi(C|A) &= H\left(\psi^{AC}\right) - H\left(\psi^A\right) = H\left(\psi^B\right) - H\left(\psi^{A}\right), \\
    H_\psi(C|B) &= H\left(\psi^{BC}\right) - H\left(\psi^B\right) = H\left(\psi^A\right) - H\left(\psi^B\right).
\end{align}
\end{subequations}
We immediately see that either $H_\psi(C|A)$ and $H_\psi(C|B)$ are both zero, or at least one of them is negative. 

If $H_\psi(C|A) < 0$, then Charlie merges his system with Alice by using catalytic QSM. As a result, Alice and Bob will end up with a state having entanglement entropy $H\left(\psi^B\right)$. Note that $H_\psi(C|A) < 0$ is equivalent to $H\left(\psi^B\right) < H\left(\psi^A\right)$. On the other hand, if $H_\psi(C|B) \leq 0$, then Charlie merges his system with Bob, leaving Alice and Bob with a state having entanglement entropy $H\left(\psi^A\right)$. Since $H_\psi(C|B) \leq 0$ is equivalent to $H\left(\psi^A\right) \leq H\left(\psi^B\right)$, this proves that via catalytic LOCC it is possible to convert $\ket{\psi}^{RAB}$ into a quantum state $\ket{\phi}^{AB}$ having entanglement entropy
\begin{equation}
    E\left(\ket{\phi}^{AB}\right) = \min\left\{H\left(\psi^A\right),H\left(\psi^B\right)\right\}.
\end{equation}
The converse can be proven by using the properties of the squashed entanglement~\cite{Christandl_2004}, in particular that for pure states it corresponds to the entanglement entropy and does not increase under catalytic LOCC, see Theorem~2 in the main text. Since any tripartite LOCC protocol is also bipartite with respect to any bipartition, it must be that \begin{subequations}
\begin{align}
    E_{sq}^{A|BR}\left(\ket{\psi}^{RAB}\right) \geq E_{sq}^{A|B}\left(\ket{\phi}^{AB}\right), \\
    E_{sq}^{B|AR}\left(\ket{\psi}^{RAB}\right) \geq E_{sq}^{A|B}\left(\ket{\phi}^{AB}\right).
\end{align}
\end{subequations}
This means that the entanglement entropy of $\ket{\phi}^{AB}$ is bounded above by $\min\left\{H\left(\psi^{A}\right),H\left(\psi^{B}\right)\right\}$.

\end{document}